\newtheorem{theorem}{Theorem}
\newtheorem{example}{Example}
\theoremstyle{definition}
\newcommand{\ket}[1]{{\left| #1 \right\rangle}}
\theoremstyle{definition}
\begin{document}

\markboth{Rebekah Herrman}
{Globally optimizing QAOA circuit depth for constrained optimization problems}

\title{Globally optimizing QAOA circuit depth for constrained optimization problems}
	
\author{Rebekah Herrman}\thanks{Corresponding author}
\email{rherrma2@tennessee.edu}
\affiliation{
	Department of Industrial and Systems Engineering, University of Tennessee at Knoxville\\Knoxville, Tennessee  37996 USA}
	
\author{Lorna Treffert}
\email{ltreffer@vols.utk.edu}
\affiliation{
	Department of Industrial and Systems Engineering, University of Tennessee at Knoxville\\Knoxville,  Tennessee 37996 USA}

\author{James Ostrowski}
\email{jostrows@tennessee.edu}
\affiliation{
	Department of Industrial and Systems Engineering, University of Tennessee at Knoxville\\Knoxville,  Tennessee  37996 USA}

\author{Phillip C. Lotshaw}
\email{lotshawpc@ornl.gov}
\affiliation{
	Quantum Computing Institute\\ Oak Ridge National Laboratory\\ Oak Ridge,  Tennessee 37830 USA}
	
\author{Travis S. Humble}
\email{humblets@ornl.gov}
\affiliation{
	Quantum Computing Institute\\ Oak Ridge National Laboratory\\ Oak Ridge,  Tennessee 37830 USA}

\author{George Siopsis}
\email{siopsis@tennessee.edu}
\affiliation{
	Department of Physics and Astronomy, University of Tennessee at Knoxville\\Knoxville, Tennessee 37996-1200 USA}

\begin{abstract}

We develop a global variable substitution method that reduces $n$-variable monomials in combinatorial optimization problems to equivalent instances with monomials in fewer variables. We apply this technique to $3$-SAT and analyze the optimal quantum circuit depth needed to solve the reduced problem using the quantum approximate optimization algorithm.
For benchmark $3$-SAT problems, we find that the upper bound of the circuit depth is smaller when the problem is formulated as a product and uses the substitution method to decompose gates than when the problem is written in the linear formulation, which requires no decomposition. 
\end{abstract}

\maketitle
\section{Introduction}

The quantum approximate optimization algorithm (QAOA) was introduced to approximately solve combinatorial optimization problems \cite{farhi2014quantum, farhi2014bounded}. QAOA research has mostly focused on a small subset of combinatorial optimization (CO) problems such as MaxCut, MaxIndSet, and Max k-cover \cite{lotshaw2021bfgs, herrman2021impact, saleem2020, wang2018quantum, crooks2018performance, guerreschi2019qaoa, cook2020quantum}.
These problems can be easily written as quadratic unconstrained binary optimization (QUBO) problems by identifying each variable with a qubit.
QUBOs are implementable on current hardware \cite{ryan2017hardware, linke2017experimental}. Recent work has examined how QAOA can be used on CO problems that can be written as polynomial unconstrained binary optimization problems \cite{liu2021layer, guerreschi2021solving}. When solving a CO problem using QAOA, each monomial in $k$ vertices in the problem formulation corresponds to a $k$ qubit gate. The circuit depth for one layer of QAOA for combinatorial optimization (PUBO) problems 
was shown in \cite{herrman2021lower} to be the edge chromatic number of the graph, or hypergraph, derived from these problems. When the combinatorial optimization problem can be written as a QUBO, the derived graph is not a hypergraph, so the edge chromatic number is either the maximum degree, or the maximum degree plus one \cite{vizing1964estimate}. When the derived graph is a hypergraph, the edge chromatic number may be more difficult to compute.


Some classes of combinatorial optimization problems, however, can be written in more than one way, where each formulation may have monomials of different sizes. For example, Boolean satsfiability (SAT) problems can be written as a QUBO or a more general PUBO. SAT problems have been studied extensively in a classical setting \cite{tovey1984simplified, marques2008practical} and form the backbone of complexity theory. They can be written in a linear form, which requires two-qubit gates to implement, or a product form, which requires larger multi-qubit gates. Since these larger multi-qubit gates are not easily implementable on current hardware, polynomial formulations must be decomposed into sums of two variable monomials. We show in the context of 3-SAT that decomposing the PUBO can lead to a shallower circuit needing fewer qubits than using the natural QUBO formulation~\cite{herrman2021lower}. This has implications for more general combinatorial problems, as modeling approaches can have a significant impact on the design of the resulting circuit.

In this paper, we first review QAOA, the classical linear and product formulations of SAT problems, and how they translate to QUBOs in Sec.~\ref{review}. Then, in Sec.~\ref{variablesub}, we introduce a substitution method, called the \textit{global variable substitution} (GVS) method, to decompose monomials consisting of $k \geq 3$ variables into ones that can be implemented on current hardware. 
Next, we discuss how to optimize GVS for $3$-SAT problems in Sec.~\ref{optimization} and then apply this work to instances  from the SATLIB Benchmark Problems suite~\cite{hoos2000satlib} in Sec.~\ref{computational}. Finally, we summarize the results and discuss future work in Sec.~\ref{conclusion}. 

\section{QAOA and Combinatorial Optimization Problem Review}\label{review}

In this section, we review QAOA, dualization, and the 3-SAT problem. 

\subsection{QAOA}

In order to use QAOA to solve a CO problem, we apply two operators, $U(C, \gamma) = e^{-iC\gamma}$ and $U(B, \beta) = e^{-iB\beta}$, in succession on an initial state. The initial state is the uniform superposition, $\ket{s}=\frac{1}{\sqrt{2^n}} \sum_{z} \ket{z}$, where the sum is over the computation basis $\ket{z}$. 
The outcome of one iteration of QAOA is
\begin{equation*}
\ket{\gamma,\beta}=U(B, \beta)U(C,\gamma)\ket{s}.
\end{equation*}
Here $C$ encodes the problem to be solved and $B$ is a mixing operator. Often, $C$ is the sum over a collection of clauses, $C = \sum_{a} C_a$, and $B$ is typically $B = \sum_{v \in V(G)} B_v$, where $B_v = \sigma_v^x$ is the Pauli-X operator acting on the $v^{th}$ qubit. For more detail about QAOA, see \cite{farhi2014quantum, farhi2014bounded, farhisupremacy}.

There is a direct correlation between the circuit depth and level of noise in a quantum circuit \cite{xue2019effects, wang2020noise, marshall2020characterizing}, so it is important to consider the circuit depth when developing a circuit that implements QAOA.  In the next subsection, we review dualization and the related previous work that determines the circuit depth of one iteration of QAOA.

\subsection{Dualization and Circuit Depth}
Previously \cite{herrman2021lower}, we considered the method of dualizing constraints to solve CO problems of the form
\begin{align}
    & \min \  c(x) \label{eq:obj}\\
    & \mbox{s.t. }  p_i (x) \leq b_i & \forall i \in P\label{eq:constraint}\\
    & x \in \{0,1\}^n\label{eq:constrainttwo}
\end{align}

\noindent where $p_i$ is contained in the collection of polynomial constraints $P$. Both $p_i$ and $c$ are polynomial functions in $\mathbb{R}^n[x_1, x_2, ... , x_n ]$ and $b_i \in \mathbb{R}$. We eliminate constraints via dualization by subtracting $b_i$ from both sides, adding in slack variables, squaring both sides, and adding the new expression to the objective function \cite{herrman2021lower}. The \textit{derived (hyper)graph} from dualization is one in which there is a vertex for each variable and (hyper)edges between variables that appear in a monomial together. For example, if $x_1x_2$ appears in the dualization, there is an edge between vertices $x_1$ and $x_2$. 
 
In graph theory, a proper edge coloring of a graph is a function $f:E(G) \rightarrow [k]$ such that if two edges share a vertex, they receive different values, e.g. $uv$ and $xv$, $f(uv) \neq f(xv)$. Given the set $[k] = \{1, ... , k\}$, each element of this set can be thought of as a color, hence the name edge coloring. The smallest number $m$ such that a proper edge coloring of a graph $G$ is possible with $m$ colors is denoted $\chi'(G)$ and called either the chromatic index or the edge chromatic number. It has been shown that chromatic index of the derived (hyper)graph is directly related to the circuit depth of one iteration of QAOA, if the size of each monomial in the objective function is at most the gate size the hardware can support \cite{herrman2021lower}. In particular, the following theorem holds:

\begin{theorem}[\cite{herrman2021lower}]
Let $H$ be the hypergraph derived from a combinatorial optimization problem instance. Every proper edge coloring of $H$ corresponds to a valid circuit for a $\textrm{PUBO}$, where the depth of the shallowest circuit is $\chi'(H)+1$.
\end{theorem}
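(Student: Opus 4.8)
The plan is to set up a dictionary between the combinatorial data of the derived hypergraph $H$ and the gate structure of a single QAOA iteration, and then to optimize the circuit depth through that dictionary. First I would record the basic correspondence: a vertex of $H$ is a qubit, and a monomial $C_a$ of the (P)UBO objective is a hyperedge $e_a$ whose vertex set is the set of variables occurring in $C_a$; the phase-separation operator $U(C,\gamma)=\prod_a e^{-i\gamma C_a}$ is then a product of one gate per hyperedge, each gate acting only on the qubits of its hyperedge. Since every $C_a$ is diagonal in the computational basis, all of these gates commute, so \emph{any} linear order of the hyperedges gives a valid circuit for $U(C,\gamma)$; the only restriction on parallelism is physical, namely that two gates may share a time step only if their qubit supports are disjoint. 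Under the standing assumption that the hardware natively supports gates on up to $\max_a |e_a|$ qubits, each monomial gate costs exactly one layer.

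Next I would connect proper edge colorings to depth. A set of hyperedges gets a common color in a proper edge coloring precisely when they are pairwise vertex-disjoint, i.e.\ form a matching in $H$; hence each color class is exactly a set of monomial gates executable in a single time step. Consequently a proper edge coloring with $m$ colors yields a valid circuit for $U(C,\gamma)$ of depth $m$, and conversely any depth-$d$ scheduling of the monomial gates partitions the hyperedges into $d$ matchings and so induces a proper edge coloring with $d$ colors. Minimizing over all such schedules (equivalently, over all proper edge colorings), the phase-separation part of the circuit has optimal depth $\chi'(H)$.

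Then I would add the mixing operator. The mixer $U(B,\beta)=\prod_v e^{-i\beta\sigma_v^x}$ is a product of single-qubit rotations on distinct qubits, all mutually commuting, so it occupies exactly one further layer; and since $\sigma_v^x$ does not commute with the diagonal cost gates touching qubit $v$, this layer cannot be absorbed into any cost layer. Combining these observations, the shallowest circuit for one QAOA iteration built from a proper edge coloring has depth $\chi'(H)+1$, and no valid circuit of this kind does better, which is the claim.

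The main obstacle — and the step I would treat most carefully — is the optimality (lower bound) direction: one must argue that no reordering or interleaving of the monomial gates beats $\chi'(H)$ layers, which rests on the facts that each monomial gate must appear intact (here monomials are not being decomposed) and that overlapping supports force distinct layers, so every valid schedule induces a matching partition and therefore uses at least $\chi'(H)$ layers; and separately that the mixing layer is genuinely forced to be extra by the non-commutation noted above. A secondary item to nail down is the edge-case bookkeeping for degree-one hyperedges (linear terms) and the interplay with Vizing-type bounds when $H$ is an ordinary graph, so that the statement specializes correctly to the QUBO case highlighted in the text.
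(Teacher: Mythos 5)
First, note that this theorem is not proved in the present paper at all; it is imported verbatim from the cited reference \cite{herrman2021lower}, so there is no in-text proof to compare against. Your reconstruction follows the standard (and, as far as the cited source goes, the intended) route: hyperedges of $H$ are the monomial gates of $U(C,\gamma)$, these gates all commute because they are diagonal, a color class of a proper edge coloring is exactly a set of gates with pairwise disjoint supports and hence a single time step, and conversely any schedule of the gates induces a partition into matchings; this gives depth exactly $\chi'(H)$ for the phase-separation part, with one further layer for the mixer. That dictionary is the whole content of the theorem, and you have it right.

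The one place your argument is looser than it should be is the lower bound for the ``$+1$.'' Your justification --- $\sigma_v^x$ fails to commute with every cost gate touching $v$, so the mixer on $v$ must come after all cost gates on $v$ --- is a per-qubit statement and only yields depth $\geq \Delta(H)+1$. For a class-2 (hyper)graph one has $\chi'(H)=\Delta(H)+1$, so this bound equals $\chi'(H)$, not $\chi'(H)+1$, and combined with the bound of $\chi'(H)$ coming from the cost gates alone you still only get $\chi'(H)$. The clean fix is global rather than per-qubit: in any valid schedule the cost gates occupy at least $\chi'(H)$ layers, so let $T\geq\chi'(H)$ be the last layer containing a cost gate; every qubit in the support of that gate must still receive its mixer rotation strictly after layer $T$, forcing total depth at least $T+1\geq\chi'(H)+1$. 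With that one-line repair (and the standing assumptions you correctly flag --- each monomial is implemented as a single atomic gate on its support, and gates sharing a qubit cannot share a layer), the argument is complete.
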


\subsection{SAT}

A Boolean satisfiability problem (SAT) is one type of CO problem that has the form of Eqs.~\eqref{eq:obj}-\eqref{eq:constrainttwo}. It is defined by a collection of clauses, $C$, consisting of $N$ literals. The goal is to determine if the values TRUE or FALSE can be assigned to each literal in a clause such that every clause is TRUE. This problem is classically NP-complete \cite{karp1972reducibility}, even when each clause contains only three literals. When each clause contains precisely three literals, the problem is known as a $3$-SAT problem. We will use $3$-SAT as the key example throughout the paper. Each clause has the form $y_i \vee y_j \vee y_k$ where $y_i \in \{x_i, \backsim x_i\}$. 
Now, we look at two formulations of $3$-SAT problems.

\subsubsection{SAT linear formulation}\label{linear}

Let $\{z_c\}_{c \in C}$ be the collection of variables indicating if clause $c$ is satisfied, and $x_i$ be the indicator variable denoting if literal $i$ is satisfied
.  Let $TRUE_c$ be the set of literals that must be true to satisfy $c$, and $FALSE_c$ the rest. Then, $3$-SAT can be written as
\begin{align}
    & \max\  \sum_{c \in C} z_c & \label{satform1}\\
    & \mbox{s.t. } \sum_{x_{i}= TRUE_c } x_{i} +\sum_{x_{i}= FALSE_c} (1-x_{i}) \geq z_c & \forall c \in C \label{satform2}\\
    &x_i,\ z_c \in \{0,1\} \label{satform3} .
\end{align}

Equation~\eqref{satform2} has this form because $3$-SAT is defined to have only OR conditions, so at least one $x_i$ must be satisfied when $z_c=1$. Standard QAOA does not have a way to incorporate constraints; we therefore adopt  the aforementioned strategy of dualizing the constraints to yield an unconstrained optimization problem suitable for QAOA as follows.

Before we can dualize Eq.~\eqref{satform2}, we must change the inequality so it matches that of Eq.~\eqref{eq:constraint}. We therefore take the contrapositive of each constraint so Eq.~\eqref{satform2} has the form $\sum_{x_{i} = TRUE} (1- x_{i}) + \sum_{x_{i} = FALSE} x_{i} \leq 2 +z_c$.  We can express this constraint as an equality, 
\begin{equation*}
    \sum_{x_{i} = TRUE} (1- x_{i}) + \sum_{x_{i}= FALSE} x_{i} + \delta_{c,1} + \delta_{c,2} = 2 +z_c,
\end{equation*}

\noindent by introducing two slack variables $\delta_{c,1}$ and $\delta_{c,2}$ that guarantee equality holds for any choice of $x_i$ satisfying the constraint.

 Let us define
\begin{equation*}
    f_c(\{x_i,\delta_{c,j}\},z_c) = \sum_{x_{i} = TRUE} (1- x_{i}) + \sum_{x_{i}= FALSE} x_{i} + \delta_{c,1} + \delta_{c,2} - (2 +z_c).
\end{equation*}

\noindent The quantity $ f_c(\{x_i,\delta_{c,j}\},z_c)$ is equal to zero when the constraints are satisfied. $3$-SAT can now be expressed as an unconstrained optimization problem by introducing $-\lambda f_c^2$ as a penalty term

    \begin{align*}\label{satform unconstrained}
    & \max\  \sum_{c \in C} (z_c - \lambda f_c(\{x_i,\delta_{c,i}\},z_c)^2)& \\
    &x_i,\ \delta_{c,i},\ z_c \in \{0,1\} .\\
\end{align*}

\noindent where $\lambda$ is a large number \cite{rockafellar2015convex}.  The variables $x_i,\delta_{c,j},$ and $z_c$ can take values of either $0$ or $1$, however, choices that violate $f_c = 0$ give penalties $ \lambda f_c(\{x_i,\delta_{c,i}\},z_c)^2 < 0$. This ensures that the optimal solution is identical to the original constrained problem. Since we introduce two $\delta$ variables and one $z$ per clause in the dualization, this formulation requires $3|C|$ ancillary qubits.  The term $-\lambda \sum_{c\in C}f_c(\{x_i,\delta_{c,j}\},z_c)^2$ must be expanded in order to determine all of the edges of the derived graph. 

The circuit depth for one layer of QAOA is the chromatic index plus one, and the chromatic index is either the maximum degree of the derived graph, or the maximum degree of the derived graph plus one. Thus, the circuit depth is either the maximum degree of the derived graph plus one or the maximum degree of the derived graph plus two.

To compute the maximum degree, let $C_{x_i} \subset C$ be the set of clauses containing $x_i$. Then the degree of the $x_i$ in the derived graph is 
\begin{equation*}
\deg(x_i) =  5|C_{x_i}| - \sum_{j, j\neq i}(|C_{x_i,x_j}|-1)
\end{equation*}
\noindent where $C_{x_i,x_j} = C_{x_i} \cap C_{x_j}$ is the set of clauses containing both $x_i$ and $x_j$. This value lies between $3|C_{x_i}|$ and $5|C_{x_i}|$. The lower bound is because $x_i$ is adjacent to both $\delta$ variables and one $z$ per clause and the upper bound is because $x_i$ can also be adjacent to distinct $x_j$ and $x_k$ in each clause.  Notice that $\deg(\delta_{i,j}) = 5$ and $\deg(z_c) = 5$. Since $\deg(x_i) \geq 5$ for some $i$, the maximum degree of the graph is $\max_i \{\deg(x_i)\}$, so the circuit depth for one layer of QAOA is either $\max_i \{\deg(x_i)\} + 1$ or $\max_i \{\deg(x_i)\} + 2$.

\subsubsection{SAT product formulation}\label{nonlinear}

Alternatively, SAT can be written as a product of monomials. To see this, note that $x_i \vee x_j \vee x_k$ is satisfied if $(x_i -1)(x_j-1)(x_k-1) = 0$. Thus, we can write SAT as the polynomial unconstrained binary optimization (PUBO) problem

\begin{equation*}
    \sum_C \prod_{x_i = TRUE_c}(1-x_i)\prod_{x_i = FALSE_c}x_i = 0,
\end{equation*}

\noindent where $x_i$ indicates if literal $x_i$ is satisfied. There are no ancillary qubits needed to write this PUBO, however expanding it does give monomials in three variables, as seen in Tab.~\ref{tab:3SATdegrees}. A straightforward implementation of an $n$ variable monomial requires an $n$-qubit gate, however, current hardware is often limited to two-qubit gates. We therefore need a method to decompose these large monomials into products of at most two variables. 

\begin{table}

    \centering
    \begin{tabular}{|c|c|c|c|c|c|}
        \hline
        Clause & Reformulation & Expansion & $\deg(x_a)$ & $\deg(x_b)$ & $\deg(x_c)$\\
        \hline
        $x_a \vee x_b \vee x_c$ & $(1-x_a)(1-x_b)(1-x_c)$ & $1-x_a-x_b-x_c+x_ax_c+x_bx_c+x_ax_b-x_ax_bx_c$ & 2 & 1 & 1 \\
        \hline
       $\backsim x_a \vee x_b \vee x_c$ & $x_a(1-x_b)(1-x_c)$ & $x_a - x_ax_b - x_ax_c+x_ax_bx_c$ & 2 & 1 & 1 \\
        \hline
        $ x_a \vee \backsim x_b \vee x_c$ & $(1-x_a)x_b(1-x_c)$ & $x_b-x_ax_b-x_bx_c+x_ax_bx_c$ & 1 & 1 & 0 \\
        \hline
       $\backsim x_a \vee \backsim x_b \vee x_c$ & $x_ax_b(1-x_c)$ & $x_ax_b - x_ax_bx_c$ & 1 & 1 & 0 \\
        \hline
        $x_a \vee \backsim x_b \vee \backsim  x_c$ & $(1-x_a)x_bx_c$ & $ x_bx_c - x_ax_bx_c$ & 0 & 0 & 0 \\
        \hline
      $\backsim x_a \vee \backsim x_b \vee \backsim x_c$ & $x_ax_bx_c$ & $x_ax_bx_c$ & 0 & 0 & 0\\
        \hline
    \end{tabular}
    \caption{
    The product reformulation and expansion of 3-SAT clauses. In this chart, we let $x_bx_c = u_{b,c}$. The last three columns count the contribution to the degree of each variable from the monomials in two vertices that are not substituted. This contribution is only added the first time the two variable monomial appears. For example, if $x_ax_b$ appears in more than one clause, we only add one to the degrees of $x_a$ and $x_b$. Since $x_bx_c = u_{b,c}$, we do not add one to the degrees of $x_b$ and $x_c$ whenever $x_bx_c$ appears in a clause. The number in each degree column is added to the degree from Thm.~\ref{thm:simplesplitdegree} to calculate the degree of vertex $x_i$ in the derived graph. }
    \label{tab:3SATdegrees}
    
\end{table}

\subsection{Example: linear and product formulations}\label{examples}

Consider the $3$-SAT problem 

\begin{example}\label{3satexample}

\begin{align*}
  & x_1 \vee x_2 \vee \backsim x_3 \\
  & x_1 \vee x_3 \vee x_4  \\
  & \backsim x_2 \vee x_4 \vee x_5  \\
  & x_1 \vee \backsim x_2 \vee x_5 .
\end{align*}

\end{example}

Labeling the indicator variable for the first clause $z_1$, the second $z_2$, and so on, we can write the objective function for this problem as 
\begin{equation*}
    \max\  \sum_{c = 1}^4 z_c \\
\end{equation*}
\noindent subject to the constraints

\begin{align*}
    & x_1+x_2+(1-x_3) \geq z_1, \\
    & x_1+x_3+x_4 \geq z_2, \\
    & (1-x_2)+x_4+x_5 \geq z_3, \\
    & x_1+(1-x_2)+x_5 \geq z_4, \\
    &x_i,\ z_c \in \{0,1\} .
\end{align*}

We consider the circuit depth for one layer of QAOA to solve this problem. First, we look at the linear characterization found in Sec.~\ref{linear}. Then, we look at the product formulation and compare the degrees of the resulting derived graphs to determine the circuit depth for one layer of QAOA from each method, assuming three qubit gates are possible. We next devise a variable-substitution method to use the product formulation in Section \ref{variablesub}, and analyze the circuit depth for a two-qubit gate implementation of the product formulation in subsection \ref{gs example subsection}.

\subsubsection{Linear Formulation}
When we use the linear formulation of the above problem, the contrapositive of the simplified dualized constraint for the first clause is

\begin{equation*}
    -x_1-x_2+x_3-z_1 \leq 0.
\end{equation*}

In order to dualize the constraint, we add in two slack variables, $\delta_{1,1}$ and $\delta_{1,2}$ and change the inequality to equality to obtain

\begin{equation*}
    -x_1-x_2+x_3-z_1+ \delta_{1,1} + \delta_{1,2} = 0.
\end{equation*}

\begin{figure}
\centering
\includegraphics{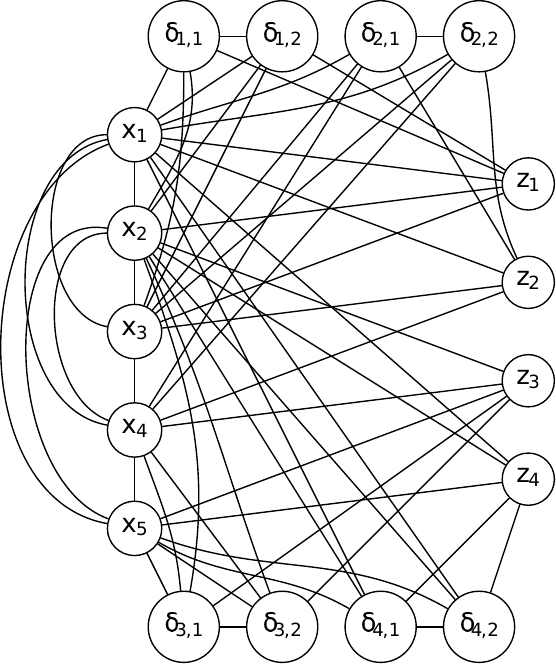}
\caption{The derived graph for the dualization of the linear formulation of Ex.~\ref{3satexample}. The maximum degree of the graph is $13$, the chromatic index of this graph is either $13$ or $14$, and the circuit depth for one layer of QAOA is either $14$ or $15$.}
\label{linearderivedgraph}
\end{figure}

Upon squaring both sides, we get all terms of the form $x_ix_j$, $x_iz_1$, $x_i\delta_{1,k}$, $z_1\delta_{1,k}$, and $\delta_{1,1}\delta_{1,2}$ where $i \neq j$, $i,j \in \{1,2,3\}$ and $k \in \{1,2\}$. The other clauses are handled similarly and the derived graph is found in Fig.~\ref{linearderivedgraph}. The  degree of the vertex that is maximal in the derived graph is thirteen, so the circuit depth is either fourteen or fifteen for one layer of QAOA.

\subsubsection{Product Formulation}
Instead of solving the linear formulation, each clause can be rewritten as a product of combinations of $x_i$ and $(1-x_j)$. For Ex.~\ref{3satexample}, the first statement holds if and only if $(1-x_1)(1-x_2)x_3$ holds. Upon expanding, we get the expression $x_3 - x_1x_3 - x_2x_3 + x_1x_2x_3$. The other clauses are handled similarly. If there are three-qubit gates, this method requires no ancillary qubits. 

Expanding the product formulation of a single $3$-SAT clause results in a monomial that contains three variables and possibly one or more monomials in two variables. The variables that occur in degree two monomials are all contained in the three variable monomial. Since each monomial represents a gate that acts on the qubits contained in the monomial, we need only implement the three-qubit gates from each clause. Thus, we can eliminate all edges from the derived graph and are left with a hypergraph, Fig.~\ref{3SATProductDerivedGraph}. The circuit depth is equal to the chromatic index of the hypergraph plus one.


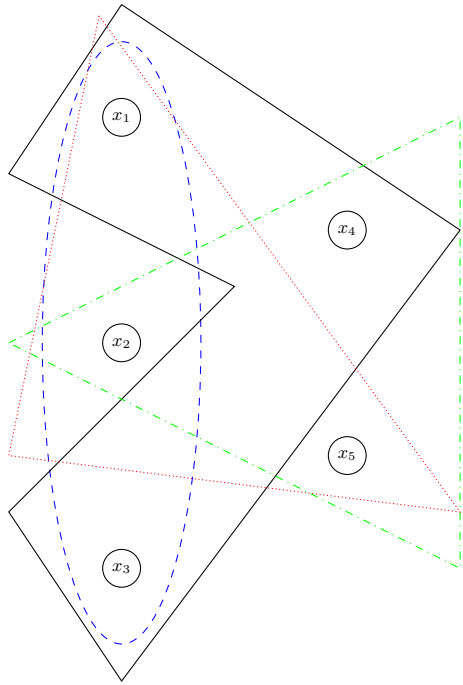
\begin{figure}
 \centering
 \begin{tikzpicture}[scale=3]
\begin{scope}[every node/.style={scale=.75,circle,draw}]
    \node (A) at (0,4) {$x_1$};
    \node (B) at (0,3) {$x_2$};
	\node (C) at (0,2) {$x_3$}; 
	\node (D) at (1,3.5) {$x_4$};
	\node (E) at (1,2.5) {$x_5$}; 

\end{scope}




\draw [blue, dashed] (0,3) ellipse (10pt and 38pt);
\draw (0,4.5) -- (1.5,3.5) -- (0,1.5)--(-.5,2.25)--(0.5,3.25)--(-.5,3.75)--(0,4.5);
\draw [green, dashdotted] (-.5,3) -- (1.5,4) -- (1.5,2)--(-.5,3);
\draw [red, densely dotted] (-.1,4.45) -- (-.5,2.5) -- (1.5,2.25)--(-.1,4.45);

\end{tikzpicture}
\caption{The derived graph for the product formulation of Ex.~\ref{3satexample}. There are four hyperedges: $x_1x_2x_3$ in dashed blue, $x_1x_3x_4$ in black, $x_2x_4x_5$ in dashdotted green, and $x_1x_2x_5$ in densely dotted red. The chromatic index of the graph is four and the circuit depth is five.}
\label{3SATProductDerivedGraph}
\end{figure}

\section{Global Variable Substitution}\label{variablesub}

The previous section assumes that three-qubit gates are possible on the hardware, however, that may not always be the case. Thus, we examine how to decompose monomials in three or more variables via a process called global variable substitution (GVS). In order to substitute variables, we require constraints on the problem to ensure the substitution is valid. We then eliminate the constraints via dualization and can determine the circuit depth of one layer of QAOA. 

Let us define a substitution of size $s$ as one in which $s$ variables are combined into one e.g., $x_1x_2...x_s = u_{1,2,...,s}$. Throughout the remainder of this paper we use the boldface variable $\bf j$ to refer to lists of indices, ${\bf{j}} =(j_1,j_2, ...j_m)$, where $j_i \in \mathbb{N} \ \forall i \leq m$.  A substitution is denoted $u_{\bf j} = x_{j_1}x_{j_2}...x_{j_m}$.  When we wish to refer to specific substitutions, we will specify indices of the variables, e.g. $u_{i,j}=x_ix_j$.

The goal of this work is to write an $n$ variable monomial as a product of two variables. One variable replaces $s$ of the variables of the original monomial, and another variable is substituted for the remaining $n-s$. Since the order of the variables in each monomial is not important, we assume without loss of generality that the first $s$ variables are substituted for one variable and the last $n-s$ for the other. In this formulation, we allow no overlap in the substitutions, i.e. if $x_i$ is a variable in the substitution of size $s$, then it is not a variable in the substitution of size $n-s$.

In order to substitute a new variable $u_{1,...,s}$ for $x_1...x_s$, we add in the constraints 

\begin{align}\label{theglobalconstraints}
   u_{\bf j} & =  \sum_{ i \in[|u_{\bf j}|]} x_i - \sum_{i =1}^{m_{u_{\bf j}}} \delta_{c,k} - (|u_{\bf j}|-1)\\
    u_{\bf j} & = x_i - \delta_{c,m_{u_{\bf j}}+k}  & \forall i \in [|u_{\bf j}|] \label{eq:globalsubconstraint}
\end{align}
\noindent where $|u_{\bf j}|$ denotes the number of variables $u_{ \bf j}$ replaces and $m_{u_{\bf j}} \in \{1, ... , s\}$, depending on if $x_i$ takes the value of $0$ or $1$ for each $x_i$. The first constraint ensures that if $x_i = 1 \ \forall \  i$, then $u_{1,2,...,s} =1$, while the second guarantees that if $x_i = 0$ for some $i$, then $u_{1,2,...s} = 0$. Note that $|u_{\bf j}| = s_j$, so the two can be used interchangeably. After dualizing these constraints, we are able to determine the degree of each vertex in the derived graph. To facilitate the discussion, in Theorem 1 we focus only on the vertices and edges associated with the dualized substitution terms, which form a sub-graph of the total derived graph.  We then show how these are related to the total derived graph and describe how to compute the circuit depth for one layer of QAOA in Section \ref{edgecounting}.

\subsection{Degree of substituted variables}\label{subdegree}
We now give a formula for the degrees of vertices $\delta_{c,k}$ and $u_{\bf j}$, as well as determine the contribution of each substitution to the degree of $x_i$. These degrees, in conjunction with the $x_ix_j$ terms, can be used to compute the circuit depth, as shown explicitly in the example of subsection \ref{gs example subsection}.    
\begin{theorem}\label{thm:simplesplitdegree}
Let $U$ denote the set of all substitutions made. Let us denote the set of substitutions containing $x_i$ as $U_i$. For $u_{ \bf j} \in U$ we let $|u_{\bf j}|$ denote the number of variables substituted in substitution $u_{\bf j}$. The number of edges incident to each vertex in the derived graph due to the substitution is denoted $\deg_s(v)$ and is 

\begin{align*}
    & \deg_s(\delta_{c,k})  \in \{2, |u_{\bf j}|+m_{u_{\bf j}}\} \\
    & \deg_s(x_{i}) = \sum_{u_{\bf j} \in U_i} \big[|u_{\bf j}| + m_{u_{\bf j}} + 1\big] - \sum_{p \in [n]\setminus i}z_{i,p}\big[|U_i \cap U_p|-1 \big] + |C_{x_i}'|\\
    & \deg_s(u_{\bf j})  = 2|u_{\bf j}| + m_{u_{\bf j}} + |C_{u_{\bf j}}| \\
\end{align*}

\noindent where $m_{u_{\bf j}}$ denotes the number of $\delta_{c,k}$ variables in Eq.~\eqref{theglobalconstraints} for the substitution $\bf j$, $z_{i,p}$ is an indicator variable denoting whether or not $U_i \cap U_p = \{\emptyset\}$, $| C_{x_i}'|$ refers to the number of monomials containing $x_i$ which do not have any $u_s$ containing $x_i$ substituted, and $|C_{u_{\bf j}}|$ refers to the number of monomials with more than two variables using the substitution $u_{\bf j}$. 
\end{theorem}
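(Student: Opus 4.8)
\emph{Proof sketch.} The plan is to realize each dualized substitution as an explicit subgraph of the total derived graph and then read off the three degrees by inspection. The one fact I will lean on repeatedly is elementary: dualizing an equality $g(x)=0$ adds the penalty $-\lambda g(x)^2$, and squaring a linear form $\sum_{v\in S}c_v x_v$, after applying $x_v^2=x_v$, produces precisely the monomials $x_vx_w$ for every unordered pair $\{v,w\}\subseteq S$ together with linear terms that contribute no edges; hence a dualized linear constraint with support $S$ contributes exactly the clique on $S$ to the derived graph. Applying this to \eqref{theglobalconstraints}--\eqref{eq:globalsubconstraint}: the ``sum'' constraint for a substitution $u_{\bf j}$ has, as support, the vertex $u_{\bf j}$, the $|u_{\bf j}|$ variables it replaces, and the $m_{u_{\bf j}}$ slacks $\delta_{c,1},\dots,\delta_{c,m_{u_{\bf j}}}$, so it contributes the clique on these $1+|u_{\bf j}|+m_{u_{\bf j}}$ vertices; and for each replaced variable $x_i$ the corresponding product constraint in \eqref{eq:globalsubconstraint} has a three-element support $\{u_{\bf j},x_i,\delta\}$ with $\delta$ a fresh slack, hence contributes a triangle. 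By construction the slacks attached to distinct constraints, and to distinct substitutions, are all fresh and pairwise distinct, so across all of these subgraphs the only shared vertices are the $u_{\bf j}$'s and the original $x_i$'s.

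Two of the three formulas then drop out immediately. A slack $\delta_{c,k}$ with $k\le m_{u_{\bf j}}$ lies only in the sum-clique of $u_{\bf j}$, so its neighbourhood has size $(1+|u_{\bf j}|+m_{u_{\bf j}})-1=|u_{\bf j}|+m_{u_{\bf j}}$; a slack with $k>m_{u_{\bf j}}$ lies only in a single triangle, so its degree is $2$; together this is $\deg_s(\delta_{c,k})\in\{2,|u_{\bf j}|+m_{u_{\bf j}}\}$. For $u_{\bf j}$ itself, its neighbours in the sum-clique are the $|u_{\bf j}|$ replaced variables and the $m_{u_{\bf j}}$ first-group slacks; the $|u_{\bf j}|$ product triangles add the $|u_{\bf j}|$ second-group slacks (each replaced $x_i$ is already a neighbour); and each surviving monomial of size at least three that uses $u_{\bf j}$ reduces to a two-variable monomial $u_{\bf j}v$ and thus adds exactly one further neighbour, for a total $\deg_s(u_{\bf j})=2|u_{\bf j}|+m_{u_{\bf j}}+|C_{u_{\bf j}}|$.

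The substantive case is $\deg_s(x_i)$, and the crux is a careful inclusion--exclusion rather than any analysis. For each substitution $u_{\bf j}\in U_i$, the sum-clique of $u_{\bf j}$ makes $x_i$ adjacent to $u_{\bf j}$, to the $|u_{\bf j}|-1$ other replaced variables, and to the $m_{u_{\bf j}}$ first-group slacks, while the product triangle indexed by $i$ adds one second-group slack, a per-substitution contribution of $|u_{\bf j}|+m_{u_{\bf j}}+1$; summing over $U_i$ gives the first term. I would then show that the only over-counting in this sum comes from original variables: a fixed $x_p$, $p\neq i$, is produced as a neighbour of $x_i$ once for each substitution containing both, i.e. $|U_i\cap U_p|$ times, so the edge $x_ix_p$ must be removed $|U_i\cap U_p|-1$ times whenever $U_i\cap U_p\neq\emptyset$, which is exactly $\sum_{p\in[n]\setminus i}z_{i,p}\big(|U_i\cap U_p|-1\big)$, the indicator $z_{i,p}$ merely suppressing the spurious $-1$ when the intersection is empty; the $u_{\bf j}$'s and the slacks coming from different substitutions are genuinely distinct and cause no further collisions. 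Finally, any monomial that still contains $x_i$ as an unsubstituted factor after substitution contributes one more edge at $x_i$, disjoint from everything above, and these number $|C_{x_i}'|$ by definition, giving the last term (the residual two-variable monomials touching no substitution are accounted for separately in Table~\ref{tab:3SATdegrees} and deliberately excluded from $\deg_s$). The main obstacle is therefore purely combinatorial bookkeeping: pinning down the disjointness claims of the first paragraph and getting the shared-$x_p$ inclusion--exclusion exactly right; once those are settled, assembling the three counts is routine.
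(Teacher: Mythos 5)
Your proposal is correct and follows essentially the same route as the paper's proof: both analyze the dualized sum constraint and the per-variable product constraints separately (your ``clique on the support'' observation is just a cleaner packaging of the paper's direct term count), handle the repeated $u_{\bf j}x_i$ edge and the $x_ix_p$ over-counting via the same inclusion--exclusion, and append $|C_{x_i}'|$ and $|C_{u_{\bf j}}|$ for the unsubstituted and substituted clause occurrences respectively. No substantive difference.
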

\begin{proof}
First, we will consider the degree of each $\delta_{c,k}$ by counting the number of terms that contain $\delta_{c,k}$ in each dualized constraint. Note that in the first constraint, there are $1 + |u_{\bf j}| + m$ variables total, so $|u_{ \bf j}|+m$ of these are multiplied times a single $\delta_{c,k}$ term upon dualization. Thus, the degree of $\delta_{c,k}$ in this constraint is $|u_{\bf j}|+m$. In each of the following constraints, note that there are three variables: $u_{\bf j}$, $x_i$ and $\delta_{c,k}$. Thus, the degree of $\delta_{c,k}$ from these constraints is $2$. Since the $\delta_{c,k}$ terms in each constraint are different, they do not add, so $\deg_s{\delta_{c,k}} \in \{2, |u_{\bf j}|+m_{u_{\bf j}}\}$.

Next, we consider the degree of each $x_i$ in the graph derived from the dualization. In the first constraint, there are $|u_{\bf j}|+m_{u_{\bf j}}$ terms containing each $x_i$ and there is exactly one constraint aside from the first that contains $x_i$. This constraint only has two terms containing $x_i$, but the contribution to the degree from this is one since the edge $u_{\bf j}x_i$ already exists in the graph from the first constraint. Now, we must consider how many $u_{\bf j}$ contain $x_i$ as a variable. If there is more than one substitution containing $x_i$, there are multiple constraints that have the form of the first one. Each of those constraints contain new $u_{\bf j}$ variables and new $\delta_{c,k}$ variables since the substitutions are different. Thus, the only double counting that can happen is in products of $x_ix_j$, since these are the only other edges possible in the graph derived from the dualization. We must subtract the number of times each of these terms occurs in each constraint except for one. Additionally, we must count the clauses that contain $x_i$ but in which $x_i$ is not contained in a substitution, since it will be incident to the substitution in the derived graph. We denote the number of these clauses as $|C_{x_i}'|$. Then, $\deg_s(x_{i}) = \big[\sum_{u_{\bf j} \in U_i}|u_{\bf j}| + m_{u_{\bf j}} + 1\big] - \sum_{p \in [n]\setminus i}z_{i,p}\big[|U_i \cap U_p|-1 \big] + |C_{x_i}'|$. 

Finally, we need to consider the degree of each substituted variable, $u_{\bf j}$. The first constraint contributes $|u_{\bf j}| + m$ to the degree and the other $|u_{\bf j}|$ constraints contribute $1$ to the degree since the first constraint accounts for the $u_{\bf j}x_i$ terms. Finally, each $u_{\bf j}$ is substituted into a clause, so is multiplied by the variable not contained in the substitution. We denote the number of clauses into which $u_{\bf j}$ is substituted as $|C_{u_{\bf j}}|$. Thus,  $\deg_s{u_{\bf j}} = 2|u_{\bf j}| + m_{u_{\bf j}} + |C_{u_{\bf j}}|$.
\end{proof}

The derived graph for the problem consists of variables and edges induced by the substitution, as well as variables and edges $x_ix_j$ that exist in the original problem. The variables specific to the substitutions, $\delta_{c,k}$ and $u_{\bf j}$, have degrees determined solely by Thm.~\ref{thm:simplesplitdegree}. The chromatic index of the derived graph is the circuit depth for one layer of QAOA, and the maximum of the degrees from Thm.~\ref{thm:simplesplitdegree} plus one provides a lower bound for this quantity. We show how to compute the exact circuit depth in Section \ref{edgecounting}, including the extra edge terms $x_ix_j$.


\subsection{$3$-SAT}

Notice that Thm.~\ref{thm:simplesplitdegree} reduces to 

\begin{align*}
    & \deg(\delta_{c,k})  \in \{2, 3\} \\
    & \deg(x_{i}) = \big[\sum_{u_{\bf j} \in U_i} 4 \big] + |C_{x_i}'|\\
    & \deg(u_{\bf j})  = 5 + |C_{s_j}| \\
\end{align*}

\noindent for $3$-SAT since $|u_{\bf j}|=2$, $z_{i,p}|U_i \cap U_p| \in \{0,1\}$, and $m=1$. This is the degree for each variable due to the substitution. In order to determine the total degree, we need to look at two variable terms in the expansions and add one to the degree for each unique term.

\subsubsection{Counting edges in $3$-SAT that are not due to substitutions}\label{edgecounting}

To compute the circuit depth for one layer of QAOA, we must compute the maximum vertex degree in the derived graph. We focus here on the example of 3-SAT. A similar approach can be used to decompose other classes of combinatorial optimization problems.
In order to calculate the total degree of a vertex in the derived graph from $3$-SAT, we need to add the degrees of a vertex due to a substitution $u_{\bf j}$, from Thm.~\ref{thm:simplesplitdegree}, to the degree from two vertex monomials that were not substituted.  Since $\delta_{c,k}$ and $u_{i,j}$ are introduced in order to make the substitutions, their degrees are exactly the quantity in Thm.~\ref{thm:simplesplitdegree}. 

In order to count the number of edges induced by pairs that are not substituted, we define $P$ as the set of all two variable monomials that result from the expansion of the product formulation of each clause in a $3$-SAT instance. These two variable monomials will be denoted $s_{i,j}$. For example, if we have the clauses $(1-x_1)x_2x_3 = x_2x_3-x_1x_2x_3$ and $(1-x_2)x_4(1-x_5) = x_4 - x_2x_4 - x_4x-5 + x_2x_4x_5$ and the substitutions $u_{1,3}$ and $u_{4,5}$ are made, in this case, $P = \{x_2x_3, x_2x_4\}$. The subset of $P$ containing a vertex $a$ is denoted $P_a$. Here, $P_2 = P$, since $x_2$ is in each monomial, $P_3 = \{x_2x_3\}$, and  $P_4 = \{x_2x_4\}$. Let $||P|| = (|P_1|, |P_2|, ... , |P_n|)$. 

These sets can be used to determine the number of edges incident to a vertex $x_i$ that are from the two variable terms in the expansions of each clause, which we denote $\deg_e(x_i)$. Let us fix variable $x_i$. Since $|P_i|$ counts the number of two variable terms containing $x_i$, it is added to Thm.~\ref{thm:simplesplitdegree}. If one of the two variable terms from the expansion of the clauses is substituted, i.e. if $x_ix_j \in P_i$ and $u_{i,j}$ is a substitution, then adding $|P_i|$ to the degree double counts the edge $x_ix_j$. Thus, the number of substitutions that appear in $P_i$ need to be subtracted. If $S_i$ is the set of substitutions containing variable $x_i$ and $y_s$ is an indicator variable that determines if the substitution $s_{i,j}$ was used, the total degree of $x_i$ is 
\begin{equation*}
    \deg(x_{i}) = 4 \: \sum_{s_{i,j} \in S_{i}} y_{s_{i,j}} - \sum_{s_{i,j} \in P_{i}} y_{s_{i,j}} + |P_{i}| +   |C_{x_i}'|= \deg_s(x_i) + \deg_e(x_i),
\end{equation*}
\noindent where $\deg_e(x_i) = - \sum_{s_{i,j} \in P_{i}} y_{s_{i,j}} + |P_{i}|$. 

The QAOA circuit depth is then given as $\max \{\deg(\delta_{c,k}), \deg(x_i), \deg(u_{i,j})\}$, where

 \begin{align}
    & \deg(\delta_{c,k})  \in \{2, 3\}\label{eq:3SATDegsdelta} \\
    & \deg(x_{i}) = 4 \: \sum_{s_{i,j} \in S_{i}} y_{s_{i,j}} - \sum_{s_{i,j} \in P_{i}} y_{s_{i,j}} + |P_{i}| +   |C_{x_i}'|\label{eq:3SATDegs_v_a}\\
    & \deg(u_{{i,j}})  = 5 + |C_{u_{i,j}}|, \label{eq:3SATDegs_uij}
\end{align} 
for $3$-SAT. Note that $\delta_{c,k} \in \{2,3\}$ since the size of each substitution $|u_{i,j}| = 2$ and $m=n-|u_{i,j}|=3-2 = 1$, which also simplifies $\deg(u_{i,j})$. Also note that $U_i \cap U_p \in\{0,1\} \ \forall \ i \neq p, \ i,p\in [n]$, which simplifies $\deg(x_i)$.

\subsubsection{Example: Global variable substitution} \label{gs example subsection}

To give an example of the GVS method, we will apply it to the product formulation of Ex.~\ref{3satexample}. We want to decompose the three-variable terms into two-variable terms by substituting a new variable to represent the product of the variables. In order to do so, we first choose substitutions and then create the derived graph. 
For Ex.~\ref{3satexample}, we substitute $x_1x_3 = u_{1,3}$ and $x_2x_5 = u_{2,5}$. Thus, the first clause, which can be written as $x_3 - x_2x_3 - x_1x_3 + x_1x_2x_3$ is equal to $x_3 - x_2x_3 - u_{1,3} + u_{1,3}x_2$. With the substitution $x_1x_3 = u_{1,3}$, the edge $x_1x_3$ is already accounted for in the first constraint. We then need only add the edge between $x_2$ and $x_3$ to the derived graph since the monomial $x_2x_3$ exists in the expansion of the first clause. Note that each clause can be decomposed similarly and will end up with a sum of terms, at least one of which contains three variables, with the others containing one or two depending on the number of $(1-x_i)$ expressions. The other edges that need to be added due to the two variable terms in the expansions are $x_1x_4$, $x_3x_4$, $x_1x_2$ and $x_2x_4$. See Tab.~\ref{tab:3SATdegrees} for the contribution of unsubstituted monomials from a particular clause to $\deg_e(x_i)$ for a generic substitution $x_bx_c = u_{b,c}$. The increase in degree ranges from $0$ to $2$, with the maximum addition being $k-1$ for general $k$-SAT.

Next, we list each constraint of the form Eqs.~\eqref{theglobalconstraints}-\eqref{eq:globalsubconstraint} and dualize them. For the first clause $c=1$ in the example,

\begin{align*}
    u_{1,3} & =  x_1 + x_3 + \delta_{1,1} \\
    u_{1,3} & = x_1 + \delta_{1,2}   \\
    u_{1,3} & = x_3  + \delta_{1,3} 
\end{align*}

\noindent When dualizing these constraints, the terms

\begin{align}
    (u_{1,3} - x_1 - x_3 - \delta_{1,1})^2 & = u_{1,3}+x_1+x_3+\delta_{1,1} -u_{1,3}x_1- u_{1,3}x_3 - u_{1,3} \delta_{1,1} +x_1 \delta_{1,1} + x_3  \delta_{1,1} + x_1x_3 \label{u13first}\\
    (u_{1,3} - x_1 - \delta_{1,2})^2  & = u_{1,3}+x_1+\delta_{1,2} -u_{1,3}x_1 - u_{1,3} \delta_{1,2}+  x_1 \delta_{1,2} \label{u13second}\\
    (u_{1,3} - x_3  - \delta_{1,3})^2 & = u_{1,3}+x_3+\delta_{1,3} -u_{1,3}x_3 - u_{1,3} \delta_{1,3}+  x_3 \delta_{1,3} \label{u13third}
\end{align}
\noindent are added to the objective function, up to constant $\lambda$, and similar terms are added for the  $u_{2,5}$ substitution. Since all variables, $v$, in the equations above have the value zero or one, $v^2 = v$. The entire derived graph can be seen in Fig. \ref{3SATGlobalderivedgraph}. Note that the degrees match the theorem plus the number of edges induced by pairs that are not substituted. The largest degree vertex is $x_2$, which has degree eight, so the circuit depth for one layer of QAOA is either nine or ten. This is less than the circuit depth for the linear $3$-SAT formulation and requires fewer ancillary qubits.

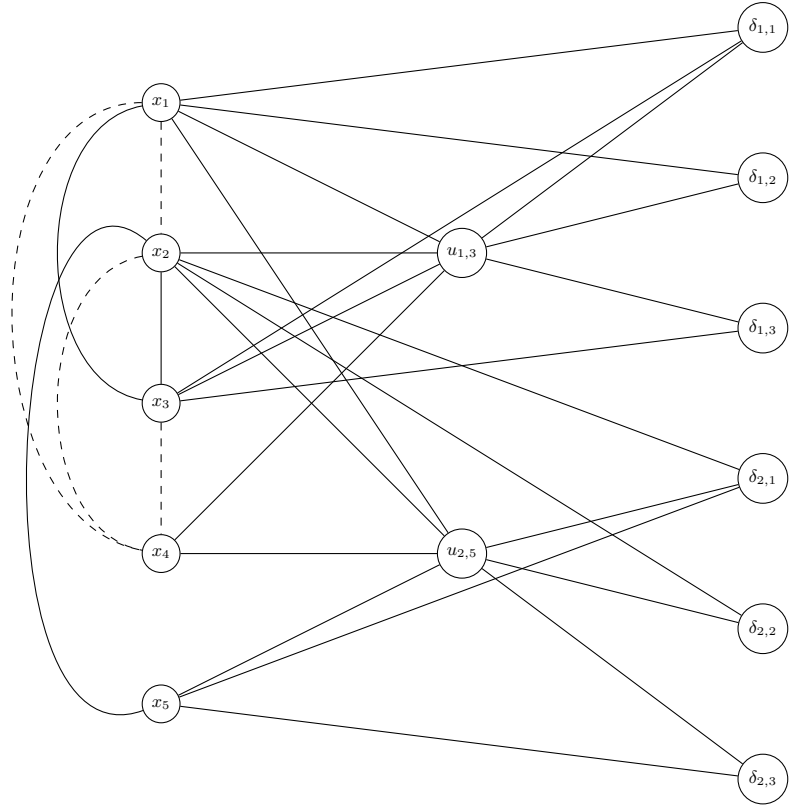
\begin{figure}
 \centering
 \begin{tikzpicture}[scale=2]
\begin{scope}[every node/.style={scale=.75,circle,draw}]
    \node (A) at (0,4) {$x_1$};
    \node (B) at (0,3) {$x_2$};
	\node (C) at (0,2) {$x_3$}; 
	\node (D) at (0,1) {$x_4$};
	\node (E) at (0,0) {$x_5$}; 
	\node (F) at (2,3) {$u_{1,3}$};
	\node (G) at (2,1) {$u_{2,5}$}; 
	\node (H) at (4,4.5) {$\delta_{1,1}$};
	\node (I) at (4,3.5) {$\delta_{1,2}$};
	\node (J) at (4,2.5) {$\delta_{1,3}$};
	\node (K) at (4,1.5) {$\delta_{2,1}$};
	\node (L) at (4,0.5) {$\delta_{2,2}$};
	\node (M) at (4,-.5) {$\delta_{2,3}$};
\end{scope}

\draw  [dashed](A) -- (B);
\draw  (A) to[out=190,in=170] (C);
\draw  [dashed] (A) to[out=180,in=170] (D);
\draw  (A) -- (F);
\draw  (A) -- (G);
\draw  (A) -- (H);
\draw  (A) -- (I);

\draw  (B) -- (C);
\draw  [dashed](B) to[out=190,in=170] (D);
\draw  (B) to[out=140,in=200] (E);
\draw  (B) -- (F);
\draw  (B) -- (G);
\draw  (B) -- (K);
\draw  (B) -- (L);

\draw  [dashed](C) -- (D);
\draw  (C) -- (F);
\draw  (C) -- (H);
\draw  (C) -- (J);

\draw  (D) -- (F);
\draw  (D) -- (G);

\draw  (E) -- (G);
\draw  (E) -- (K);
\draw  (E) -- (M);

\draw  (F) -- (H);
\draw  (F) -- (I);
\draw  (F) -- (J);

\draw  (G) -- (K);
\draw  (G) -- (L);
\draw  (G) -- (M);

\end{tikzpicture}
\caption{The derived graph for solving Ex.~\ref{3satexample} using GVS with substitutions $x_1x_3 = u_{1,3}$ and $x_2x_5 = u_{2,5}$. The vertices of this graph are the variables from the problem, $x_i$ for $i \in [5]$, along with the substitution variables, $u_{1,3}$ and $u_{2,5}$, and the three $\delta_{c,k}$ variables needed per substitution. The solid edges correspond to the edges induced by the substitution, which correspond to products of two monomials in the Eq.~\eqref{u13first}-\eqref{u13third}. The dashed edges represent the two variable terms from the expansion of the $3$-SAT clauses that are not involved in substitutions. The maximum degree of this graph is $8$, thus the circuit depth for one layer of QAOA is $9$ or $10$.}
\label{3SATGlobalderivedgraph}
\end{figure}

\section{Optimizing Global Variable Substitutions for $3$-SAT}\label{optimization}

As seen above, the number of substitutions impacts the degrees of the vertices. We want to minimize both the number of ancillary qubits and the maximum degree of the derived graph. This problem is difficult for large gates, but in this section, we explore how to solve this problem with gates with three or fewer variables such as in $3$-SAT.

\subsection{Global Variable Substitution Integer Program} \label{IP_Formulation}

Each substitution introduces new variables that correspond to ancillary qubits and impacts the degree of each vertex in the derived graph. In order to determine which of the possible combinations of substitutions are feasible, we create a graph $G = (V(G), E(G))$, which we call the \textit{covering graph}. There are two sets of vertices in this graph. One set is the set of all $2$-sets representing all possible two-variable substitutions $u_{i,j}=x_ix_j$. We call this set of vertices $S$ and denote the pairs $x_ix_j$ in $S$ as $s_{i,j}$. Each $s_{i,j}$ is a possible substitution, but not each $s_{i,j}$ will result in a substitution $u_{i,j}$. The other set is the one containing all three variable terms from the expansion, which we will call the \textit{expansion 3-set}, $ES_{3}$. Edges are placed between vertices $g$ and $h$ if the variables in the label of $g$ are a proper subset of the variables in the label of $h$. There are no edges between any sets of the same size, making the graph bipartite. See Fig.~\ref{3SATcoveringgraph} for an example, where $S$ is the left set of vertices and $ES_3$ is the right set for Example 1.

 The nature of this problem naturally lends itself to a set covering or bipartite matching style formulation as each substitution $u_{i,j}$ can cover, or be matched with, any clause containing the literals in that substitution. We have chosen to use a set covering problem formulation, i.e. we want to find a subset of $S$ such that each vertex in $ES_3$ is incident to a vertex in the subset. Since covering a vertex in $ES_{3}$ more than once leads to unnecessary ancillary qubits and increases the degree, we need to add constraints to ensure that each clause $c \in C$ is covered by exactly one substitution $u_{i,j}$. Notice that more than one covering may be possible as seen in Fig.~\ref{3SATcoveringgraph}. 
 
The objective of $3$-SAT is to minimize the maximum degrees found in Eq.~\eqref{eq:3SATDegsdelta}-~\eqref{eq:3SATDegs_uij}, 
\begin{align*}
\min \:  \max\{\deg(\delta_{c,k}),\deg(x_{i}),\deg(u_{i,j})\}
\end{align*}

\noindent which are derived from the GVS method. The covering graph developed in the previous paragraphs is used to aid in the minimization process. The integer program formulation of this problem is subject to the set covering constraints. Each clause $x_{i}x_{j}x_{k} = c$ is covered by a pair $s_{i,j} = x_ix_j$ and has a variable $x_{k}$ that is not in a substitution. The indicator variable $z(c,s_{i,j},x_k)$ is defined as

\begin{align*}
  z(c,s_{i,j},x_k) = & \begin{cases}
  \; 1 \text{ if } \textup{covering} \: (s_{i,j},x_k) \: \text{is selected to cover clause} \: c \\ 
  \;0 \text{ else } 
\end{cases}\\
\end{align*}

The full integer program formulation to minimize the maximum degree vertex in the derived graph, $obj$, is as follows:

\begin{align}
 \min \ &  obj + \frac{1}{10|C|}*\sum_{s \in S} y_{s} \label{obj}\\
s.t.\ & 4 \: \sum_{s_{i,j} \in S_a} y_{s} - \sum_{s_{i,j} \in P_a} y_{s_{i,j}} + |P_{a}| + \sum_{c \in C} \sum_{\{s_{i,j} \in S\ |\ a \notin s_{i,j}\}}  z(c,s_{i,j},x_a) \leq obj & \forall a \in V \label{maxDegv}\\
& 5 + \sum_{c \in C} z(c,s_{i,j},x_k) \leq obj &  \forall s \in S \label{maxDegS} \\
& \sum_{s \in S} z(c,s_{i,j},x_k) = 1 & \forall c \in C \label{setCover}\\
& \sum_{c \in c} z(c,s_{i,j},x_k) - y_{s_{i,j}} \leq 0 &  \forall s  \in S \label{subIndicator}\\
& obj \in \mathbb{Z}^+ \label{int_obj}\\
&y_{s_{i,j}}, z(c,s_{i,j},x_k) \in \{0,1\} \label{binary}
\end{align}

\noindent 
Eq.~\eqref{obj} is our objective function. We have added the penalty $\frac{1}{10|C|}*\sum_{s_{i,j} \in S} y_{s_{i,j}}$ in Eq.~\eqref{obj} which minimizes the total number of unique substitutions to help minimize the number of ancillary qubits.  
Eq.~\eqref{maxDegv} indicates that the maximum degree of each vertex must be less than or equal to the objective value. The last term of this constraint counts the number of clauses that contain variable $x_a$ but in which $x_a$ is not substituted. This is equivalent to $|C_{x_a}'|$ in Thm.~\ref{thm:simplesplitdegree} and the example. Similarly, Eq.~\eqref{maxDegS} indicates that the maximum degree of any substituted variable must be less than or equal to the objective value. These two constraints effectively allow us to minimize the maximum degree of the graph. Note, the degree of the slack variables are not included in this formulation as they are either $2$ or $3$ and will never yield the maximum degree in a problem of sufficient size. Eq.~\eqref{setCover} is our set covering constraint. This asserts that each $c \in C$ is required to be covered by exactly one covering $s_{i,j},x_k$. Eq.~\eqref{subIndicator} asserts that if at least one $c \in C$ substitutes the pair $s_{i,j}$, $y_{s_{i,j}}$ is set to 1. 

\subsubsection{IP Formulation and Solution for Example 1}
Let us examine Ex.~\ref{3satexample}, 
\begin{align*}
  & x_1 \vee x_2 \vee \backsim x_3 \\
  & x_1 \vee x_3 \vee x_4  \\
  & \backsim x_2 \vee x_4 \vee x_5  \\
  & x_1 \vee \backsim x_2 \vee x_5 .
\end{align*}

\noindent The reformulation and expansion of each of these four 3-SAT clauses yields a three variable monomial $c$, two variable monomials $s_{i,j} \in P$, and potential coverings $(s_{i,j}, x_{k})$.The first statement holds if and only if $(1-x_1)(1-x_2)x_3 = x_3 - x_1x_3 - x_2x_3 + x_1x_2x_3$ holds. The other clauses, when expanded, give expressions $1+x_1x_4+x_3x_4-x_1x_3x_4$, $x_2 - x_2x_4 - x_2x_5 + x_2x_4x_5$, and $x_2 - x_1x_2 - x_2x_5 +x_1x_2x_5$. The two variable terms are $s_{i,j} \in P$. Each $c \in C$ has $\binom{3}{2}$ pairs that can cover it. In this example, we have five variables and four clauses. A summary of the clauses, their corresponding $P$ and potential covering are found in Tab.~\ref{tab:IP_Example}.

 \begin{table}

    \centering
    \begin{tabular}{|c|c|c|c|c|c|}
    \hline
     & c ($x_{i}x_{j}x_{k}$) & $ s_{i,j}\in P$   & Covering 1  & Covering 2 & Covering 3 \\
    \hline
    c0 & $x_{1}x_{2}x_{3}$           & $s_{1,3}, s_{2,3}$        & $(s_{1,2},x_{3})$ &  $(s_{1,3},x_{2})$   &   $(s_{2,3},x_{1})$ \\
    \hline
    c1 & $x_{1}x_{3}x_{4}$           & $s_{1,3},s_{1,4},s_{3,4}$ & $(s_{1,3},x_{4})$ &  $(s_{1,4},x_{3})$   &   $(s_{3,4},x_{1})$ \\
    \hline
    c2 & $x_{2}x_{4}x_{5}$           & $s_{2,4},s_{2,5}$         & $(s_{2,4},x_{5})$ &  $(s_{2,5},x_{4})$   &   $(s_{4,5},x_{2})$ \\
    \hline
    c3 & $x_{1}x_{2}x_{5}$           & $s_{1,2},s_{2,5}$        & $(s_{1,2},x_{5})$ &  $(s_{1,5},x_{2})$   &   $(s_{2,5},x_{1})$ \\
    \hline
    \end{tabular}
    \caption{This table shows the result of the reformulation of each $3$-SAT clause including the three variable monomials to be covered $c$, the two variable monomial pairs from the expansion of each clause $s \in P$, and each of the three potential coverings of $c$, $(s_{i,j},x_{k})$.}
    \label{tab:IP_Example}
\end{table}

\begin{align*}
    &S = \{s_{1,2},\:s_{1,3},\:s_{1,4},\:s_{1,5},\:s_{2,3},\:s_{2,4},\:s_{2,5},\:s_{3,4},\:s_{4,5}\} \\
    &P = \{s_{1,2},\:s_{1,3},\:s_{1,4},\:s_{2,3},\:s_{2,4},\:s_{2,5},\:s_{3,4}\} \\
    &||P|| = (3,4,3,3,1).
\end{align*}

\noindent After determining these sets, we are able to formulate our constraints. First  we add a constraint for each literal vertex $v_{a}\in V$ in the form of Eq.~\eqref{maxDegv}. For $a=1$: 
\begin{align*}
& \deg(x_{1}) = 3y_{s_{1,2}} + 3y_{s_{1,3}} +3y_{s_{1,4}} + 4y_{s_{1,5}} + z(c_{0},s_{2,3},x_{1}) + z(c_{1},s_{3,4},x_{1}) + z(c_{3},s_{2,5},x_{1}) + 3 \leq obj
\end{align*}

\noindent Next, we add a constraint for each unique substitution $u_{s}$ corresponding to an $s \in S$ in the form of Eq.~\eqref{maxDegS}. For  $s = (x_{1},x_{2})$: 

\begin{align*}
& \deg(u_{1,2}) = 5 + z(c_{0},s_{1,2},x_{3}) + z(c_{3},s_{1,2},x_{5})  \leq obj
\end{align*}

\noindent For each $c \in C$ we add a constraint in the form of Eq.~\eqref{setCover} to assert that a clause $c$ must be covered by exactly one of its three potential coverings found in Tab.~\ref{tab:IP_Example}. For clause $c_0$: 

\begin{align*}
& z(c_0,s_{1,2},x_{3}) + z(c_0,s_{1,3},x_{2}) + z(c_0,s_{2,3},x_{1}) = 1
\end{align*}

\noindent Finally, for each covering containing a pair $s_{i,j}$, we add a constraint in the form of Eq.~\eqref{subIndicator}. This asserts that if any covering containing $s_{i,j}$ is selected for a substitution, $x_{s_{i,j}}$ is forced to 1. For $s = (x_{1},x_{2})$ we add the constraints: 

\begin{align*}
& z(c_0,s_{1,2},x_{3}) - y_{s_{1,2}} \leq 0 \\
& z(c_3,s_{1,2},x_{5}) - y_{s_{1,2}} \leq 0 
\end{align*}

The solution to this IP is 
\begin{align*}
   &  y_{1,3}, y_{2,5} = 1 \\
   &  z(c_0,s_{1,3},x_{2}),z(c_1,s_{1,3},x_{4}),z(c_2,s_{2,5},x_{4}), z(c_3,s_{2,5},x_{1})  = 1, \\
\end{align*}
\noindent which indicates that the optimal solution is to substitute pairs $s_{1,3}$ and $s_{2,5}$. Using these substitutions, $\Delta_{G} = 8$ and the degree of each vertex in the derived graph is
\begin{align*}   
    \deg(x_{1}) = 7 \;\;
    \deg(x_{2}) = 8 \;\;
    \deg(x_{3}) = 6 \;\;
    \deg(x_{4}) = 5 \;\;
    \deg(x_{5}) = 4 \;\;
    \deg(u_{1,3}) = 7 \;\;
    \deg(u_{2,5}) = 7 \;\;
\end{align*}
\noindent so the circuit depth is at most nine.
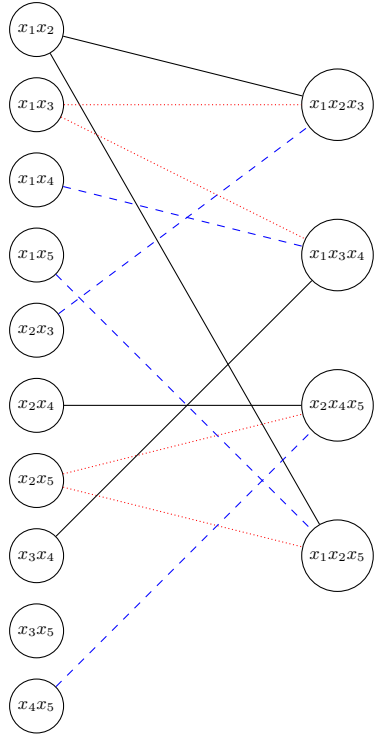
\begin{figure}
 \centering
 \begin{tikzpicture}[scale=2]
\begin{scope}[every node/.style={scale=.75,circle,draw}]
    \node (A) at (0,4.5) {$x_1x_2$};
    \node (B) at (0,4) {$x_1x_3$};
	\node (C) at (0,3.5) {$x_1x_4$}; 
	\node (D) at (0,3) {$x_1x_5$};
	\node (E) at (0,2.5) {$x_2x_3$}; 
	\node (F) at (0,2) {$x_2x_4$};
	\node (G) at (0,1.5) {$x_2x_5$}; 
	\node (H) at (0,1) {$x_3x_4$};
	\node (I) at (0,.5) {$x_3x_5$};
	\node (J) at (0,0) {$x_4x_5$};
	\node (K) at (2,4) {$x_1x_2x_3$};
	\node (L) at (2,3) {$x_1x_3x_4$};
	\node (M) at (2,2) {$x_2x_4x_5$};
	\node (N) at (2,1) {$x_1x_2x_5$};
\end{scope}

\draw  (A) -- (K);
\draw  (A) -- (N);
\draw  [red, densely dotted] (B) -- (K);
\draw  [red, densely dotted] (B) -- (L);
\draw   [blue, dashed] (C) -- (L);
\draw   [blue, dashed] (D) -- (N);
\draw   [blue, dashed] (E) -- (K);
\draw   (F) -- (M);
\draw   [red, densely dotted](G) -- (M);
\draw   [red, densely dotted](G) -- (N);
\draw   (H) -- (L);
\draw   [blue, dashed] (J) -- (M);

\end{tikzpicture}
\caption{The covering graph for Ex.~\ref{3satexample}. We have emphasized all three different coverings by coloring the edges differently for each covering. They are the red dotted, blue dashed, and black solid edges.}
\label{3SATcoveringgraph}
\end{figure}
\subsection{Heuristic Approximation}

The previous integer programming approach will provide an optimal solution. However, depending on the $3$-SAT instance, it might be a very difficult problem to solve. As an alternative, we have developed a heuristic to approximate it. Greedy algorithms are a common way to approximate large scale set covering problems \cite{grossman1997computational}. Generally, this involves a greedy selection of elements that cover the greatest amount of sets until all sets are covered. We have chosen to use a similar approach as the degree of the graph derived using GVS is directly impacted by the number of substitutions made. Therefore, we expect by selecting the coverings with the highest degree in the covering graph, we will be able to minimize the total number of substitutions made and obtain a locally minimal solution. Let $U$ be the set of uncovered clauses. Let $K$ be the set of covered clauses. At the beginning of the iteration, $K$ is empty and $U = C$, where $C$ is the set of all clauses in the $3$-SAT instance. The greedy algorithm is described in Alg.~\ref{alg:covering}. Often in step 2, multiple $s_{i,j}$ pairs may cover the same number of clauses. To break these ties, we randomly select a pair $s_{i,j}$ of the current largest degree in the covering graph.

\begin{algorithm}[H]
	\caption{Greedy by Covering Heuristic}\label{alg:covering}
	
	\begin{algorithmic}[H]
		\WHILE{$|U| > 0$}
		\STATE select $s_{i,j} \in S$ which covers the the greatest number of $u \in U$
		\STATE add all covered $u$ to $K$
		\STATE remove all covered $u$ from $U$
		\STATE remove $s_{i,j}$ from $S$
		\ENDWHILE
		
	\end{algorithmic}
\end{algorithm}

\section{Computational Results}\label{computational}

In this section, we evaluate the performance of the product $3$-SAT model and covering heuristic formulated to apply the global variable substitution method to large $3$-SAT instances. Since the circuit depth for one iteration of QAOA is the maximum degree plus one or the maximum degree plus two, in all cases, we take the circuit depth to be the maximum degree plus two since is the upper bound of the circuit depth for one iteration. Thus, we compare the upper bound of the circuit depth for one iteration achieved using the integer programming model and covering heuristic, which we denote $\Delta_{I.P}'$ and $\Delta_{C}'$ respectively, to the upper bound of the circuit depth, denoted $\Delta_{L}'$, calculated using the linear model described in Section \ref{linear}. We apply each formulation and method to $3$-SAT problem instances from the SATLIB Benchmark Suite developed by Holger Hoos and Thomas Stütze \cite{hoos2000satlib}. This suite is composed of thousands of SAT instances of varying families and sizes.  We choose the first instance of each problem set to evaluate.

\begin{table}
    \centering
    \begin{tabular}{|c|c|c|c|c|c|}
    \hline
    \textbf{Problem Name} & \textbf{Linear Formulation} & \textbf{IP Solution}       & \textbf{Covering Heuristic} \\
              & $\Delta_{L}'$ & ($\Delta_{IP}'$, \#Subs) & ($\Delta_{C}'$, \#Subs)\\ \hline
uf20-91 & 84 & (34, 39) & (42, 45)\\ \hline
uf50-218 & 100 & (41, 141) & (54, 143) \\ \hline
uf75-325 & 115 & (48, 208) & (60, 214) \\ \hline
uf100-430 & 103 & (46, 330) & (60, 329) \\ \hline
uf125-538 & 133 & (56, 417) & (71, 419) \\ \hline
uuf50-218 & 95 & (41, 135) & (50, 141)  \\ \hline
uuf75-325  & 106 & (43, 233) & (60, 240)  \\ \hline
uuf100-430 & 105 & (47, 337) & (65, 334) \\ \hline
uuf125-538 & 126 & (54, 429) & (73, 430) \\ \hline
RTI\_k3\_n100\_m429 & 132 & (61, 332) & (75, 334) \\ \hline
BMS\_k3\_n100\_m289 & 110 & (53, 240) & (63, 236) \\ \hline
CBS\_k3\_n100\_m403\_b10  & 94 & (43, 315) & (56, 309)\\ \hline
CBS\_k3\_n100\_m403\_b30 & 94 & (45, 318) & (60, 309)\\ \hline
CBS\_k3\_n100\_m403\_b50 & 96 & (43, 317) & (61, 319)\\ \hline
\hline
\end{tabular}
    \caption{The circuit depth upper bound for each SATLIB $3$-SAT instance using the linear formulation and product formulation with GVS methods. The IP and Covering Heuristic columns also displays the number of substitutions made to cover every $c \in C$. }
    \label{tab:Uniform3SAT}
\end{table}

The results in Tab.~\ref{tab:Uniform3SAT} indicate that the graphs derived using the product formulation combined with the GVS method result in significantly lower circuit depth for one iteration of QAOA than the graphs derived from the linear formulation. For every problem instance, $\Delta_{IP}' < 1/2\Delta_{L}$. The heuristic does not yield as large of a reduction, but most instances see a significant reduction in degree compared to the linear method. It is interesting to note that as the size of each $3$-SAT instance increases, the $\Delta$ of the derived graph does not increase significantly. We can attribute this to the uniformity in the ratio of the literals to clauses and the uniformity of the distribution of those literals amongst all problem instances.

\begin{figure} 
	\centering
	\begin{subfigure}{.45\textwidth}
  \centering
  \includegraphics[width=1\linewidth]{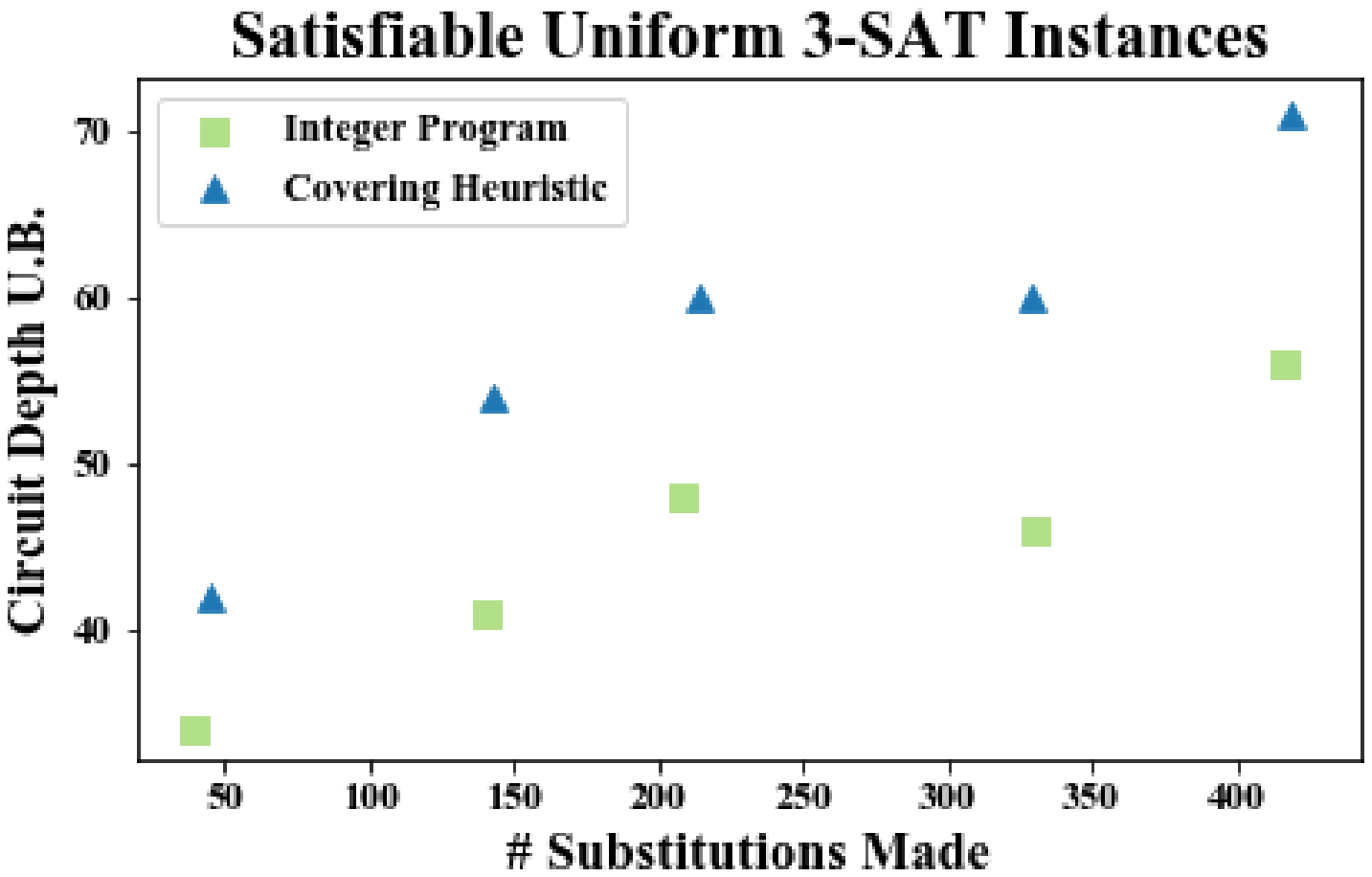}
   \caption{Plot of the maximum degree of the derived graph versus the number of substitutions made for the satisfiable uniform $3$-SAT instances in Tab.~\ref{tab:Uniform3SAT}.}
   \label{fig:DegVSub_SAT}
\end{subfigure}
\begin{subfigure}{.45\textwidth}
  \centering
  \includegraphics[width=1\linewidth]{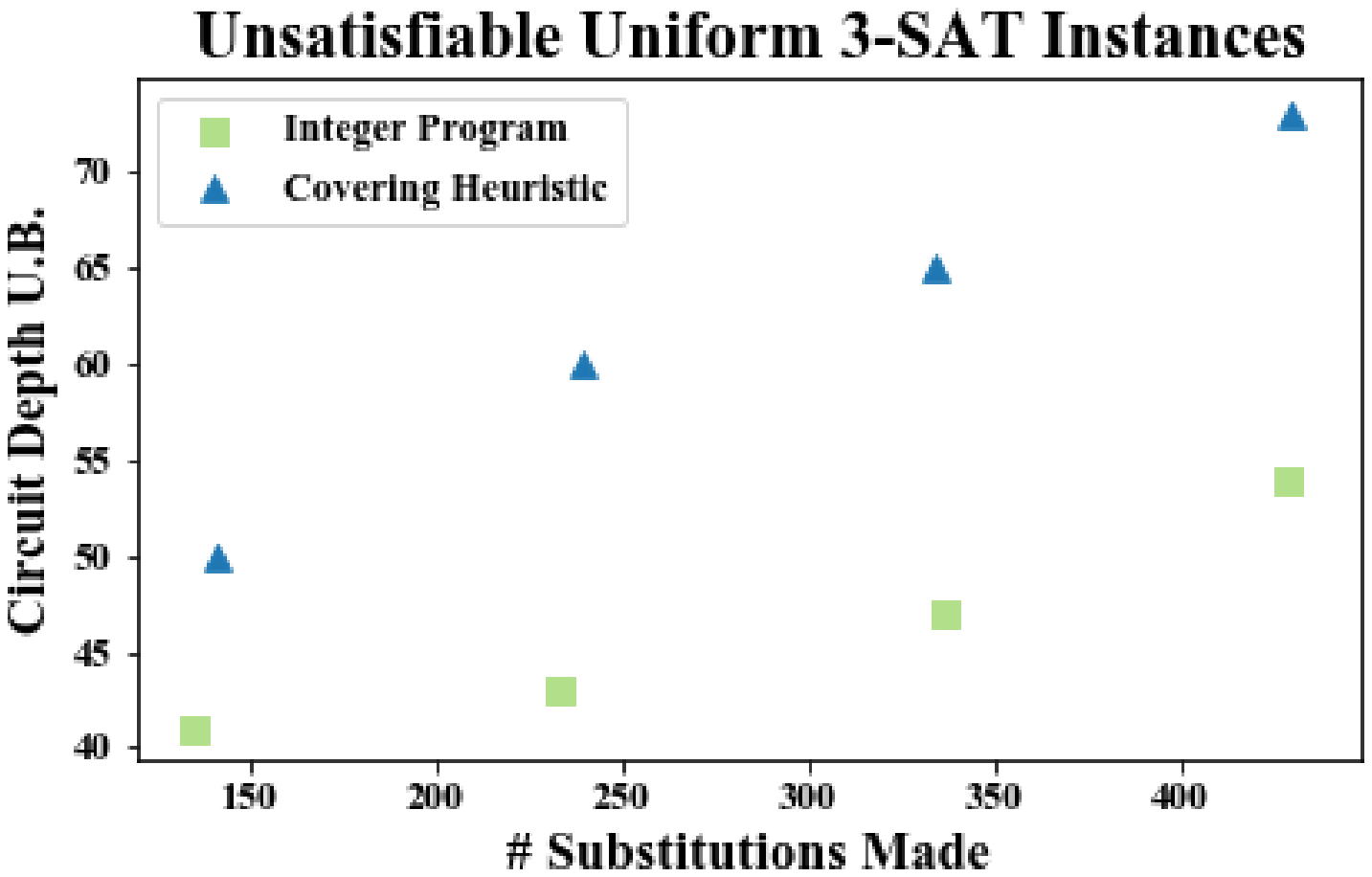}
   \caption{Plot of the maximum degree of the derived graph versus the number of substitutions made for the unsatisfiable uniform $3$-SAT instances found in Tab.~\ref{tab:Uniform3SAT}.}
   \label{fig:DegVSub_UNSAT}
\end{subfigure}

\caption{Plots for uniform $3$-SAT instances found in Tab.~\ref{tab:Uniform3SAT}. The satisfiable instances plotted are denoted 'uf$n$-$|C|$'. The unsatisfiable instances are denoted 'uuf$n$-$|C|$'. These instance names indicate the number of literals $n$ and clauses $C$ in each problem instance. We choose the maximum degree to be the y-axis since the circuit depth for one iteration is either the maximum degree plus one or the maximum degree plus two.}
\end{figure} 

In nearly every $3$-SAT instance, the GVS integer program makes more than or approximately equal to the amount of substitutions made by the covering heuristic method. However, the maximum degree of the IP derived graphs $\Delta_{IP}$ are significantly lower than $\Delta_{C}'$. This trend can be seen in Figs.~\ref{fig:DegVSub_SAT}-~\ref{fig:DegVSub_UNSAT} which plot the maximum degree of the derived graph against the number of substitutions made for the uniform $3$-SAT instances. This seems to indicate that simply minimizing the number of substitutions does not necessarily minimize the maximum degree of the GVS derived graph. In particular the uf50-218 instance from \cite{hoos2000satlib}, which is a satisfiable instance with $50$ variables and $218$ clauses, achieves $\Delta_{IP}' = 41$, which is accomplished by making $141$ substitutions. The covering heuristic makes three more substitutions than the LP, but produces $\Delta_{C}' = 54$. The unsatisfiable instance of the same size, uuf50-218, achieves $\Delta_{IP}' = 41$ by making 138 substitutions. The covering heuristic in this instance made only six more substitutions, but resulted in a  $\Delta_{C}' = 50$. To investigate this trend, we plot the distribution of the degrees of each vertex in the derived graph for each method.

	\begin{figure}
	\centering
	\begin{subfigure}{.45\textwidth}
  \centering
  \includegraphics[width=1\linewidth]{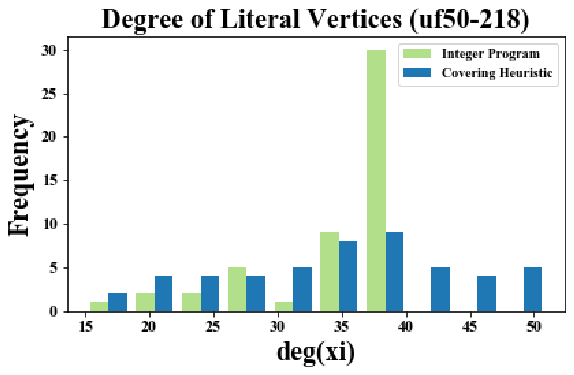}
   \caption{Histogram of the degrees of each literal vertex $x_{i}$ in the graph derived from SAT instance uf50-218}
   \label{fig:litDist_SAT}
\end{subfigure}
\begin{subfigure}{.45\textwidth}
  \centering
  \includegraphics[width=1\linewidth]{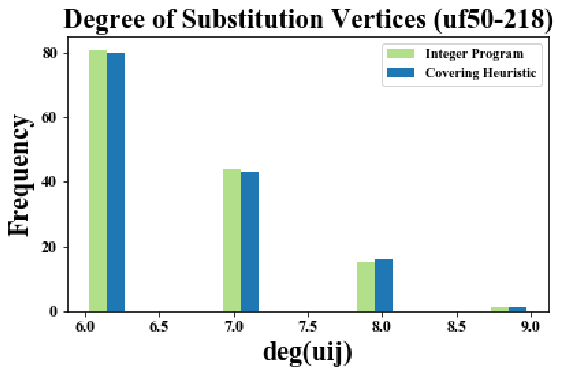}
   \caption{Histogram of the degrees of each substitution vertex $u_{i,j}$ in the graph derived from SAT instance uf50-218}
   \label{fig:subDist_SAT}
\end{subfigure}
\begin{subfigure}{.45\textwidth}
  \centering
  \includegraphics[width=1\linewidth]{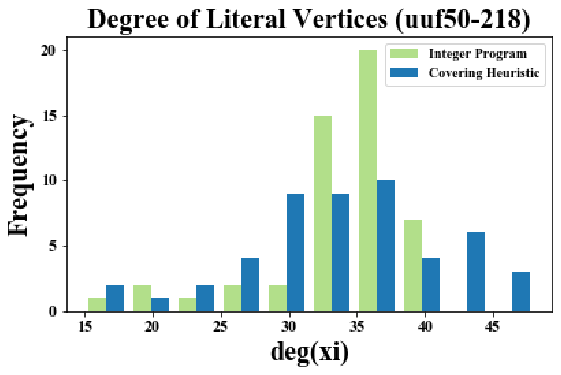}
   \caption{Histogram of the degrees of each substitution vertex $u_{i,j}$ in the graph derived from SAT instance uuf50-218}
   \label{fig:litDist_UNSAT}
\end{subfigure}
\begin{subfigure}{.45\textwidth}
  \centering
  \includegraphics[width=1\linewidth]{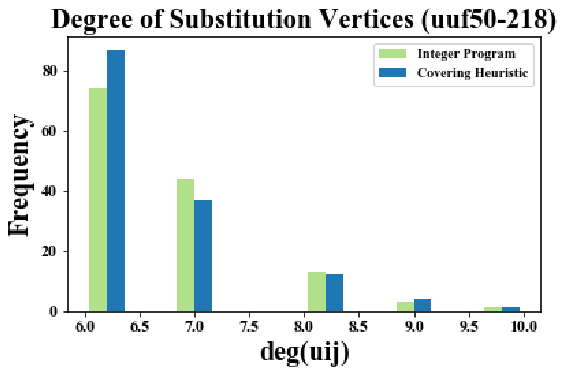}
   \caption{Histogram of the degrees of each substitution vertex $u_{i,j}$ in the graph derived from SAT instance uuf50-218}
   \label{fig:subDist_UNSAT}
\end{subfigure}
\caption{Histograms for specific uniform $3$-SAT problems. We plot the degrees of each vertex since the circuit depth for one iteration is either the maximum degree plus one or the maximum degree plus two.}
\end{figure}

As shown in Fig.~\ref{fig:litDist_SAT}- \ref{fig:subDist_UNSAT}, the degrees of the literal vertices $v_{i}$ are larger than those of the substitution vertices $u_{i,j}$ for each $3$-SAT instance we evaluated. We can attribute this to the GVS method of determining the degree for each vertex in the derived graph and the sparseness of the literals in each instance. Each unique pair $s_{i,j}$ which is substituted will add three or four edges to the degree of the literal vertices $v_{i}$ and $v_{j}$. However, making this substitution only adds one edge to the corresponding substitution vertex $u_{i,j}$. In both instances displayed here, a literal $x_{i}$ is included in approximately thirteen clauses. Clearly, this severely limits the amount of clauses each substitution is able to cover. Consequently, this significantly increases the degree of each literal vertex as more unique substitutions are required to cover all clauses and simultaneously limits the degree of each substitution vertex $u_{i,j}$ since each $s_{i,j}$ is used very few times. We can see this represented in Figures \ref{fig:subDist_SAT} and \ref{fig:subDist_UNSAT}  as nearly $40\%$ of the substitutions made in both instances only cover one clause. The most clauses covered by a any substitution is five.

The distribution of literal vertices of the integer program differs significantly from the distribution of the heuristics in both problem instances. A majority of the literal vertices in the graph derived from the IP take on the value of $\Delta_{IP}'$. It is clear that the integer program is not simply minimizing the number of substitutions made, but rather appears to limit the amount of substitutions per literal $x_{i}$. For these sparse and uniform $3$-SAT instances, the best method of minimizing the max degree of any $v_{i}$ is to attempt to distribute the number of substitutions made evenly amongst all literal vertices $v_{i}$.

\section{Discussion}\label{conclusion}

In this paper, we analyze an approach to minimizing the circuit depth of the quantum approximate optimization algorithm by expressing general combinatorial optimization problems in varying forms.  We compare a linear formulation that is the natural choice in conventional optimization algorithms with a product formulation that we posit as natural for QAOA. The product formulation leads to monomials in more than two variables, which cannot be directly implemented on a quantum computer with two qubit gates. Thus, we introduce the global variable substitution method to decompose them into two variable terms which can be implemented on a quantum computer with two qubit gates. For each of these formulations, we analytically compute the circuit depth in terms of the maximum degree of a graph derived from the problem instance and formulation. We demonstrate that the product formulation gives shallower circuits then the linear formulation for benchmark $3$-SAT problems.

 The global variable substitution requires constraints that must be satisfied in order to obtain the optimal solution, as does the linear formulation. We can derive graphs for the linear and product formulations from the objective function and the appropriate constraints. The circuit depth is directly related to the maximum degree of the derived graph. 

We evaluate the circuit depth of the product formulation with global variable substitutions by writing an integer program that computes the minimal circuit depth of the linear formulation and product formulation for a collection of benchmark problems. In all cases, the product formulation gives circuit depth roughly half that of the linear formulation. The linear formulation for $3$-SAT requires exactly three ancillary qubits per clause, where the product formulation requires four per substitution, although substitutions can sometimes be reused to reduce the number of ancillary qubits.  We find several additional interesting features of the approach.

We find that minimizing the number of substitutions per problem instance does not necessarily minimize the maximum degree. For example, when solving ``uf-100-430", the covering heuristic makes $329$ substitutions for a maximum degree of $60$, whereas the IP makes $330$ substitutions for a maximum degree of $46$. We also note that the objective function for the IP can be modified to limit the number of substitutions. While this may drive up the degree of vertices, it also reduces the number of ancillary qubits, as each substitution requires four additional qubits. Thus, the problem formulation can be changed to accommodate different hardware. 

The focus of this work has been on using the product formulation of $3$-SAT instances to minimize QAOA circuit depth relative to a conventional linear formulation. Extending the analysis of linear and product formulations to more general problems will help determine additional types of problems that benefit from this approach. Additionally, there may be other formulations for specific problems that result in shallower circuits than the linear or product formulations. While the product formulation with GVS gives shallower circuits for $3$-SAT, future work should determine if the reformulation gives a comparable outcome to the linear formulation in the same number of QAOA iterations. A final note is that the global variable substitution method can be used to rewrite problems in terms of gates acting on $m$ qubits. If more general gates become available on quantum computers, then a similar analysis could lead to new approaches for minimizing depth.

\acknowledgments

This work was supported by DARPA ONISQ program under award W911NF-20-2-0051. J.\ Ostrowski acknowledges the Air Force Office of Scientific Research award, AF-FA9550-19-1-0147. G.\  Siopsis  acknowledges the Army Research Office award W911NF-19-1-0397. J.\ Ostrowski and G.\ Siopsis acknowledge the National Science Foundation award OMA-1937008.

This manuscript has been authored by UT-Battelle, LLC under Contract No. DE-AC05-00OR22725 with the U.S. Department of Energy. The United States Government retains and the publisher, by accepting the article for publication, acknowledges that the United States Government retains a non-exclusive, paid-up, irrevocable, world-wide license to publish or reproduce the published form of this manuscript, or allow others to do so, for United States Government purposes. The Department of Energy will provide public access to these results of federally sponsored research in accordance with the DOE Public Access Plan. (http://energy.gov/downloads/doe-public-access-plan).

\bibliographystyle{unsrt}
\bibliography{references}
\end{document}